\documentclass[english,3p,twocolumn,superscriptaddress,prl]{revtex4-2}
\synctex=1
\usepackage[toc,page]{appendix}
\usepackage[latin9]{inputenc}
\setcounter{secnumdepth}{3}
\usepackage{amsmath}
\usepackage{amssymb, enumerate}
\usepackage{color}
\usepackage{babel}
\usepackage{graphicx}
\usepackage{epstopdf}
\usepackage{float}
\usepackage{thmtools}
\usepackage{thm-restate}
\usepackage{hyperref}
\usepackage{cleveref}
\usepackage{enumerate}
\usepackage{lipsum}
\usepackage{fixmath}
\usepackage{dsfont}
\usepackage{tikz}
\usepackage{pifont}
\usepackage{amsmath,amsfonts,amsthm}
\usepackage{graphicx}

\usepackage{pifont}
%
%


\usetikzlibrary{arrows.meta,bending,matrix,positioning}
\makeatletter


\@ifundefined{textcolor}{}
{%
 \definecolor{BLACK}{gray}{0}
 \definecolor{WHITE}{gray}{1}
 \definecolor{RED}{rgb}{1,0,0}
 \definecolor{GREEN}{rgb}{0,1,0}
 \definecolor{BLUE}{rgb}{0,0,1}
 \definecolor{CYAN}{cmyk}{1,0,0,0}
 \definecolor{MAGENTA}{cmyk}{0,1,0,0}
 \definecolor{YELLOW}{cmyk}{0,0,1,0}
}


\theoremstyle{theorem}
\newtheorem{theorem}{Theorem}

\newtheorem{lem}[theorem]{Lemma}
\theoremstyle{definition}
\newtheorem{definition}[theorem]{Definition}
\newtheorem{example}[theorem]{Example}
\theoremstyle{remark}

\makeatother

\begin{document}

\title{Multipartite channel assemblages}
\author{Micha{\l} Banacki}
\affiliation{International Centre for Theory of Quantum Technologies, University of Gda\'{n}sk, Jana Ba\.{z}y\'{n}skiego 1A, 80-309 Gda\'{n}sk, Poland}
\affiliation{Institute of Theoretical Physics and Astrophysics, Faculty of Mathematics, Physics and Informatics, University of Gda\'{n}sk, Wita Stwosza 57, 80-308 Gda\'{n}sk, Poland}
\author{Ravishankar Ramanathan}
\affiliation{Department of Computer Science, The University of Hong Kong, Pokfulam Road, Hong Kong}
\author{Pawe{\l} Horodecki}
\affiliation{International Centre for Theory of Quantum Technologies, University of Gda\'{n}sk, Jana Ba\.{z}y\'{n}skiego 1A, 80-309 Gda\'{n}sk, Poland}
\affiliation{Faculty of Applied Physics and Mathematics, Gda\'{n}sk University of Technology, Gabriela Narutowicza 11/12, 80-233 Gda\'{n}sk, Poland} 

\begin{abstract}Motivated by the recent studies on post-quantum steering, we generalize notion of bipartite channels steering by introducing the concept of multipartite no-signaling channel assemblages. We show that beyond the bipartite case, no-signaling and quantum description of such scenarios do not coincide. With a help of Choi-Jamio{\l}kowski isomorphism we present full description of considered classes of assemblages and in particular, we use this characterization to provide sufficient conditions for extremality of quantum channel assemblages in the set of all no-signaling channel assemblages. Finally, in the tripartite case, we introduce and discuss a relaxed version of channel steering where only certain subsystems obey no-signaling constrains. In this asymmetric scenario we are able to provide exactly 1 bit of key that is secure against no-signaling eavesdropper.
\end{abstract}


\keywords{Quantum steering, Post-quantum steering, No-signaling assemblages, Channel steering Convex structures in quantum information theory}

\maketitle

\textit{Introduction.-} Quantum mechanics provides a fundamental framework governing physical processes at the microscopic level. The central idea of entanglement presented within this framework, enable us to observe and utilize phenomena contradicting our macroscopic intuitions \cite{EPR} - starting from presence of correlations stronger that classically predicted \cite{Bell64} and finishing at possibility of steering local states of a distant parties \cite{S36,WJD07}.

In order to better understand the quantum paradigm from the informational perspective and relate it to the no-signaling principle of special relativity, it is often convenient to consider some generalized post-quantum theories. In particular, this approach gave a birth to considerations regarding steering scenarios with less restricted constrains \cite{HS18,SAPHS18,SBCSV15, SHSA20} and analysis of underlying convex structures with potential cryptographic applications \cite{BRH21,BMRH21,RBRH20, RHB21}.

In this paper, motivated by the recent interest in notions of multipartite and post-quantum steering, we generalized idea of (bipartite) channel steering \cite{Piani15}. We start by recalling basic definitions of no-signaling and quantum assemblages. Next, based on the above description, we introduce a notions of no-signaling, quantum and local channel assemblages and provide their characterization via related Choi matrices. With this background we discuss problem of quantum realization of extremality and related issue of security against eavesdropper holding post-quantum resources. Finally, we also introduce and discuss slightly different paradigm of post-quantum channel steering base on the idea of  no-signaling conditions restricting only some parties.

\textit{Multipartite steering and no-signaling assemblages.-} Since its heuristic conception \cite{S36} and modern reformulation \cite{WJD07} the idea of steering scenario become more and more important quantum phenomenon emphasizing fundamentally non-classical character of nature \cite{R1,R2,steering0,steering1,WJD072, SNC04}. Recently, the notion of steering has been expanded to include also situations described only by no-signaling principles \cite{HS18,SAPHS18,SBCSV15, SHSA20} and consider theories of post-quantum resources \cite{resurceRev}.


Let us consider a typical multipartite scenario of $n+1$ separated parties, where quantum subsystem $C$ is fully trusted (i.e. described by a known Hilbert space $H_C$ under full operational control) and remaining quantum subsystems $A_1,\ldots, A_n$ are untrusted. Assume that such system is described by some joint quantum state and each untrusted party perform local measurements on respective subsystem choosing at random out of $m_i$ measurement settings $x_i=0,\ldots, m_i-1$ and obtaining one of $k_i$ outcomes $a_i=0,\ldots, k_i-1$. This scenario may be then fully described by the triple $n,\mathbf{m},\mathbf{k}$ where $\mathbf{m}=(m_1,\ldots,m_n)$, $\mathbf{k}=(k_1,\ldots,k_n) $ and dimension $d_C$ of  $H_C$. In case like that, possible subnormalized states describing subsystem $C$ conditioned upon choice of measurements $\mathbf{x}_n=(x_1,\ldots,x_n)$ and  outcomes $\mathbf{a}_n=(a_1,\ldots,a_n)$ form a \textit{quantum assemblage} $\Sigma=\left\{\sigma_{\mathbf{a}_n|\mathbf{x}_n}\right\}_{\mathbf{a}_n,\mathbf{x}_n}$ consisting of positive operators (acting on a trusted subsystem $C$) for which there exist a quantum state $\rho_{A_1\ldots A_nC}$ of some composed system $A_1\ldots A_nC$ and elements of local POVM $M^{(A_1)}_{a_1|x_1}, \ldots, M^{(A_n)}_{a_n|x_n}$ such that
$\sigma_{\mathbf{a}_n|\mathbf{x}_n}=\mathrm{Tr}_{A_1\ldots A_n}(M^{(A_1)}_{a_1|x_1}\otimes \ldots M^{(A_n)}_{a_n|x_n}\otimes \mathds{1}_C\rho_{A_1\ldots A_nC}).$

Among assemblages with quantum description we may distinguish specific subclass of \textit{local assemblages} (assemblages with local hidden states model \cite{SAPHS18}) given by those assemblages admitting $
\sigma_{\mathbf{a}_n|\mathbf{x}_n}=\sum_j q_j p^{(1)}_j(a_1|x_1)\ldots p^{(n)}_j(a_n|x_n)\sigma_j
$ where $q_i\geq 0, \sum_j q_j=1$, $\sigma_j$ are some states of trusted subsystem $C$ and $\left\{p^{(i)}_j(a_i|x_i)\right\}_{a_i,x_i}$ denotes conditional probability distributions for untrusted subsystem $A_i$ respectively. 

Finally, one can discard the restriction of quantum description of the composed system, demanding that only trusted subsystem obey quantum mechanical characterization, while the joint system $A_1,\ldots A_nC$ is governed by some possibly post-quantum theory (see for example \cite{Witworld}) equipped only with fundamental physical constrains of no-signaling. Probabilistic description of steering experiment in this generalized approach is given by the mathematical structure of \textit{no-signaling assemblage} \cite{SBCSV15,SAPHS18}, i.e. a collection $\Sigma=\left\{\sigma_{\mathbf{a}_n|\mathbf{x}_n}\right\}_{\mathbf{a}_n,\mathbf{x}_n}$ of subnormalised states acting on $d_C$-dimensional Hilbert space of trusted subsystem $C$, such that
\begin{equation}\label{def11}
\forall_{\mathbf{x}_n} \sum_{\mathbf{a}_n} \sigma_{\mathbf{a}_n|\mathbf{x}_n}=\sigma,
\end{equation}where $\sigma$ is some state, and for any set of indexes $I=\left\{i_1,\ldots, i_s\right\}\subset \left\{1,\ldots, n\right\}$ with $1\leq s<n$ there exist an operator $\sigma_{a_{i_1}\ldots a_{i_s}|x_{i_1}\ldots x_{i_s}}$ that fulfills 
\begin{equation}\label{def12}
\forall_{a_k,k\in I}\forall_{\mathbf{x}_n} \sum_{a_j,j\notin I} \sigma_{\mathbf{a}_n|\mathbf{x}_n}=\sigma_{a_{i_1}\ldots a_{i_s}|x_{i_1}\ldots x_{i_s}}.
\end{equation}


It is easy to see that distinction between quantum and local assemblages certifies existence of quantum entanglement in composed systems. 
On the other hand, sets of no-signaling and quantum assemblages coincide for  $n=1$ \cite{G89, HJW93} and start to differ if $n>1$ \cite{SBCSV15,SAPHS18}, i.e. there is a possibility of post-quantum steering.




 
Within that framework question regarding possibility of quantum realization of non-local yet extreme points in the set of all no-signaling assemblages become a nontrivial one. This question present important practical challenge. Namely, in a case when adversary attack of eavesdropper is modeled by possible convex decomposition of assemblage playing a role of resource shared by parties performing some cryptographic task, an affirmative answer to the above question certifies security (against post-quantum resources) of the protocol based on quantum description that is non-local.

The analogous question related to sets of correlations (respectively no-signaling, quantum and local) is known to have a negative answer, regardless of the number of parties, measurements and outcomes \cite{RTHHPRL}. However, as it has been recently shown \cite{RBRH20}, the evoked no-go result is no longer true in the case of no-signaling assemblages, even in the simplest nontrivial case.


\textit{Multipartite channel steering.-} Starting from the intuition laying behind the notion of post-quantum steering it is natural to consider also a generalized scenario of channel steering introduced for bipartite case in \cite{Piani15} and discussed from resource theory perspective in \cite{bus2,bus1}.
Indeed, let us firstly once more consider $n+1$ quantum subsystem $A_1,\ldots, A_n$ and $C$ where only the last subsystem is fully trusted. Subsystem $C$ initially interacts with all subsystems $A_1,\ldots, A_n$ while such interaction is modeled by an action of some quantum channel (i.e. completely positive and trace preserving map). In next step all subsystems become separated and final description of characterized subsystem $C$ is conditioned upon random quantum measurements performed locally by parties at each subsystem $A_i$ with $x_i$ denoting measurements settings and $a_i$ denoting measurements outcomes respectively. Probabilistic description of evolution of subsystem $C$ (with respect to vectors of labels $\mathbf{x}_n=(x_1,\ldots, x_n)$, $\mathbf{a}_n=(a_1,\ldots, a_n)$) is in that case encapsulated by the notion of \textit{quantum channel assemblage}.

\begin{definition}\label{QK-def} Family $\mathcal{L}=\left\{\Lambda_{\mathbf{a}_n|\mathbf{x}_n}\right\}_{\mathbf{a}_n,\mathbf{x}_n}$ consisting of completely positive maps $\Lambda_{\mathbf{a}_n|\mathbf{x}_n}:B(H_C)\rightarrow B(H_{\tilde{C}})$ defines a quantum channel assemblage if for any state $\rho_C\in B(H_C)$ 
\begin{widetext}
\begin{equation}
\Lambda_{\mathbf{a}_n|\mathbf{x}_n}(\rho_C)=\mathrm{Tr}_{A_1,\ldots, A_n}(M^{(1)}_{a_1|x_1}\otimes\ldots \otimes M^{(n)}_{a_n|x_n}\otimes \mathds{1}_{\tilde{C}}(\mathcal{E}(\rho_{A_1,\ldots, A_n}\otimes \rho_C))),
\end{equation}
\end{widetext}
where $\rho_{A_1,\ldots, A_n}\in \otimes^n_i B(H_{A_i})$ is some state, 
$M^{(i)}_{a_i|x_i}$ are some POVM elements acting on subsystem $A_i$, and $\mathcal{E}:\otimes^n_i B(H_{A_i})\otimes  B(H_C)\rightarrow  \otimes^n_iB(H_{A_i})\otimes B(H_{\tilde{C}})$ is some completely positive and trace preserving map (i.e. channel).
\end{definition}

In analogy with assemblages of states, one can in principle relax the requirement of a quantum description  of untrusted subsystems and demand a channel steering scenario with only certain constrains of no-signaling type, when (trusted) quantum subsystem evolves under interaction and later measurements performed on some joint state described by some possibly post-quantum theory. This lead to the natural notion of \textit{no-signaling channel assemblages}.

\begin{definition}\label{NSK-def} Family $\mathcal{L}=\left\{\Lambda_{\mathbf{a}_n|\mathbf{x}_n}\right\}_{\mathbf{a}_n,\mathbf{x}_n}$ consisting of completely positive maps $\Lambda_{\mathbf{a}_n|\mathbf{x}_n}:B(H_C)\rightarrow B(H_{\tilde{C}})$ defines a no-signaling channel assemblage if
\begin{equation}\label{def11o}
\forall_{\mathbf{x}_n} \sum_{\mathbf{a}_n} \Lambda_{\mathbf{a}_n|\mathbf{x}_n}=\Lambda,
\end{equation}where $\Lambda$ is some completely positive and trace preserving map, and for any subset of indexes $I=\left\{i_1,\ldots, i_s\right\}\subset \left\{1,\ldots, n\right\}$ with $1\leq s<n$ there exists a map $\Lambda_{a_{i_1}\ldots a_{i_s}|x_{i_1}\ldots x_{i_s}}$ such that
\begin{equation}
\forall_{a_k,k\in I}\forall_{\mathbf{x}_n} \sum_{a_j,j\notin I} \Lambda_{\mathbf{a}_n|\mathbf{x}_n}=\Lambda_{a_{i_1}\ldots a_{i_s}|x_{i_1}\ldots x_{i_s}}.
\end{equation}
\end{definition}

On the other hand, restriction of quantum description only to the classical cases leads to the definition of \textit{local channel assemblages} (generalizing concept introduced in \cite{Piani15}).

\begin{definition}\label{lhsK-def} Family $\mathcal{L}=\left\{\Lambda_{\mathbf{a}_n|\mathbf{x}_n}\right\}_{\mathbf{a}_n,\mathbf{x}_n}$ consisting of completely positive maps $\Lambda_{\mathbf{a}_n|\mathbf{x}_n}:B(H_C)\rightarrow B(H_{\tilde{C}})$ defines a local channel assemblage if $\Lambda_{\mathbf{a}_n|\mathbf{x}_n}=\sum_j \prod_i^np^{(i)}_j(a_i|x_i)\Lambda_{j}$, where $\left\{p^{(i)}_j(a_i|x_i)\right\}_{a_i,x_i}$ stand for conditional probabilities related to subsystem $A_i$, and $\Lambda_{j}$ are completely positive maps such that $\Lambda=\sum_j \Lambda_{j}$ is a channel.
\end{definition}

For a particular, fixed scenario $n,\mathbf{m},\mathbf{k}$ with $d_C,d_{\tilde{C}}$ we will denote set of no-signaling channel assemblages by $\mathbf{ns\Lambda}(n,\mathbf{m},\mathbf{k},d_C,d_{\tilde{C}})$, set of quantum channel assemblages by $\mathbf{q\Lambda}(n,\mathbf{m},\mathbf{k},d_C,d_{\tilde{C}})$ and set of local channel assemblages $\mathbf{l\Lambda}(n,\mathbf{m},\mathbf{k},d_C,d_{\tilde{C}})$ (we omit $d_{\tilde{C}}$ if $d_C=d_{\tilde{C}}$).

We will show that the difference between quantum and no-signaling (or local) description of multipartite channel steering can be seen as a result of the difference between sets of quantum and no-signaling (or local) assemblages of Choi matrices \cite{Choi,Jamiol} related to maps forming channel assemblages. 

To characterize channels assemblages we recall the well-know notion of a Choi-Jamio\l{}kowski \cite{Choi,Jamiol} isomorphism  $\mathcal{J}:B(B(H_A),B(H_B))\rightarrow B(H_B)\otimes B(H_A)$ given as $\mathcal{J}(\Lambda)=\Lambda \otimes \mathrm{id}_{A}(|\phi_{AA}^{+}\rangle \langle \phi_{AA}^{+}|)$ where $|\phi_{AA}^{+}\rangle\in H_A\otimes H_A$ is a maximally entangled state. 

With this one can show the following result (see Appendix \ref{appA} for detailed calculations).



\begin{theorem}\label{thm_no_sig_map}
Family of linear maps $\mathcal{L}=\left\{\Lambda_{\mathbf{a}_n|\mathbf{x}_n}\right\}_{\mathbf{a}_n, \mathbf{x}_n}$ given by $\Lambda_{\mathbf{a}_n|\mathbf{x}_n}:B(H_C)\rightarrow B(H_{\tilde{C}})$ defines a no-signaling  (respectively quantum and local) channel assemblage if and only if a family of Choi matrices $\Sigma=\left\{\mathcal{J}(\Lambda_{\mathbf{a}_n|\mathbf{x}_n})\right\}_{\mathbf{a}_n, \mathbf{x}_n}$ defines a no-signaling (respectively quantum and local) assemblage such that $\mathrm{Tr}_{\tilde{C}}(\sum_{\mathbf{a}_n}\mathcal{J}(\Lambda_{\mathbf{a}_n|\mathbf{x}_n}))=\frac{\mathds{1}}{d_{C}}\in B(H_{C})$. Moreover, if the initial family consists of completely positive maps, then $\mathcal{L}$ is a no-signaling channel assemblage if and only if there exist a Hermitian operator $W\in \bigotimes_{i=1}^{n}B(H_{A_i})\otimes B(H_{\tilde{C}})\otimes B(H_{C})$ such that $\mathrm{Tr}_{A_1,\ldots, A_n, \tilde{C}}(W)=\frac{\mathds{1}}{d_{C}}\in B(H_{C})$ and POVM elements $M^{(i)}_{a_i|x_i}\in B(H_{A_i})$ for which $\mathcal{J}(\Lambda_{\mathbf{a}_n|\mathbf{x}_n})=\mathrm{Tr}_{A_1,\ldots, A_n}(M^{(1)}_{a_1|x_1}\otimes\ldots \otimes M^{(n)}_{a_n|x_n}\otimes \mathds{1}_{\tilde{C}C}W)$. Finally, initial family forms a quantum channel assemblage if and only if there exist a state $\rho\in \bigotimes_{i=1}^{n}B(H_{A_i})\otimes B(H_{\tilde{C}})\otimes B(H_{C})$ such that $\mathrm{Tr}_{A_1,\ldots, A_n, \tilde{C}}(\rho)=\frac{\mathds{1}}{d_{C}}\in B(H_{C})$ and POVM elements $M^{(i)}_{a_i|x_i}\in B(H_{A_i})$ for which $\mathcal{J}(\Lambda_{\mathbf{a}_n|\mathbf{x}_n})=\mathrm{Tr}_{A_1,\ldots, A_n}(M^{(1)}_{a_1|x_1}\otimes\ldots \otimes M^{(n)}_{a_n|x_n}\otimes \mathds{1}_{\tilde{C}C}\rho)$, and it forms a local channel assemblage if and only if $\rho$ can be choose as a separable state $\sum_i\lambda_i\rho^{(i)}_{A_1}\otimes \ldots \otimes \rho^{(i)}_{A_n}\otimes \rho^{(i)}_{\tilde{C}C}\in \bigotimes_{i=1}^{n}B(H_{A_i})\otimes B(H_{\tilde{C}})\otimes B(H_{C})$.
\end{theorem}

In particular, above theorems provide that all considered sets of channels assemblages are indeed convex (as linear images of convex sets). Moreover, $\mathbf{ns\Lambda}(n,\mathbf{m},\mathbf{k},d_C,d_{\tilde{C}})$ and $\mathbf{l\Lambda}(n,\mathbf{m},\mathbf{k},d_C,d_{\tilde{C}})$ are compact (as continuous images of compact sets). 

Let us emphasize that characterization of assemblages provided by causal, localizable and local channels (see \cite{HS18}) together with presented theorems leads to yet another description of channel assemblages.

\begin{theorem}\label{last_char}
Family $\mathcal{L}=\left\{\Lambda_{\mathbf{a}_n|\mathbf{x}_n}\right\}_{\mathbf{a}_n, \mathbf{x}_n}$ of completely positive maps $\Lambda_{\mathbf{a}_n|\mathbf{x}_n}:B(H_C)\rightarrow B(H_{\tilde{C}})$ defines a no-signaling (respectively quantum and local) assemblage of channels if and only if there exists a causal (respectively localizable and local) channel $\Phi:\bigotimes_{i=1}^{n}B(H_{X_i})\otimes B(H_Y)\rightarrow \bigotimes_{i=1}^{n}B(H_{A_i})\otimes B(H_{\tilde{C}})\otimes B(H_{C})$ and orthonormal bases $\left\{|a_i\rangle\right\}$ and $\left\{|x_i\rangle\right\}$ of $H_{A_i}$ and $H_{X_i}$ respectively such that $\Lambda_{\mathbf{a}_n|\mathbf{x}_n}=\mathcal{J}^{-1}(\mathrm{Tr}_{A_1\ldots A_n}(\otimes_{i=1}^{n}|a_i\rangle\langle a_i|\Phi(\otimes_{i=1}^{n}|x_i\rangle\langle x_i|\otimes |0_Y\rangle\langle 0_Y|)))$ and $\mathrm{Tr}_{A_1\ldots A_n\tilde{C}}(\Phi(\otimes_{i=1}^{n}|x_i\rangle\langle x_i|\otimes |0_Y\rangle\langle 0_Y|))=\frac{\mathds{1}}{d_C}$.
\end{theorem}

Note that taking one-dimensional output Hilbert space (i.e. $H_{\tilde{C}}=\mathbb{C}$) in the case of Definitions \ref{QK-def}, \ref{NSK-def} and \ref{lhsK-def} one obtains sets consisting of families of maps with outputs in a form of conditional probabilities. To distinguished that case we will say about \textit{channel correlations} and we will introduce a separate notation $\mathcal{P}=\left\{P_{\mathbf{a}_{n}|\mathbf{x}_{n}}\right\}_{\mathbf{a}_{n}, \mathbf{x}_{n}}$. Starting from a recent idea of intermediate set of correlations (i.e. hybrid no-signaling-quantum correlations \cite{BMRH21}) one can provide further refinement. Indeed, $\mathcal{P}=\left\{P_{\mathbf{a}_{n+1}|\mathbf{x}_{n+1}}\right\}_{\mathbf{a}_{n+1}, \mathbf{x}_{n+1}}$ is of a \textit{hybrid form} if there exist a no-signaling channel assemblage $\mathcal{L}=\left\{\Lambda_{\mathbf{a}_n|\mathbf{x}_n}\right\}_{\mathbf{a}_n, \mathbf{x}_n}$ (with arbitrary output dimension) and measurements $M_{a_{n+1}|x_{n+1}}$ such that $P_{\mathbf{a}_{n+1}|\mathbf{x}_{n+1}}(\cdot)=\mathrm{Tr}(M_{a_{n+1}|x_{n+1}}\Lambda_{\mathbf{a}_n|\mathbf{x}_n}(\cdot)$).
\color{black}


\textit{Quantum realization of extreme points.-} Choi-Jamio\l{}kowski isomorphism enable us to identify appropriate classes of channel assemblages with appropriate subclasses of assemblages (acting on a higher dimensional Hilbert spaces), so not all no-signaling channel assemblages admit quantum realization, as long as a given scenario have at least two untrusted subsystems (i.e. $n>1$). In particular, existence of non-local and quantum extreme points within the set of no-signaling channels assemblages is a nontrivial question that can be resolve by the recent solution for assemblages of states \cite{RBRH20}.

For the sake of completeness we will recall \cite{RBRH20} that assemblage of pure states $\Sigma=\left\{p_{\mathbf{a}_n|\mathbf{x}_n}|\psi_{\mathbf{a}_n|\mathbf{x}_n}\rangle \langle \psi_{\mathbf{a}_n|\mathbf{x}_n}|\right\}_{\mathbf{a}_n,\mathbf{x}_n}$  (where each position $\mathbf{a}_n,\mathbf{x}_n$ is occupied by $0$ or positive operator of rank one) is \textit{inflexible} if any other assemblages of the same pure states $\Sigma'=\left\{q_{\mathbf{a}_n|\mathbf{x}_n}|\psi_{\mathbf{a}_n|\mathbf{x}_n}\rangle \langle \psi_{\mathbf{a}_n|\mathbf{x}_n}|\right\}_{\mathbf{a}_n,\mathbf{x}_n}$ such that $q_{\mathbf{a}_n|\mathbf{x}_n}=0$ whenever $p_{\mathbf{a}_n|\mathbf{x}_n}=0$ satisfies $\Sigma'=\Sigma$. It can be seen that inflexible assemblages are extremal points of the set of all no-signaling assemblages and there is a sufficient condition for inflexibility \cite{RBRH20}.



\begin{theorem}[\cite{RBRH20}]\label{sufficient nontrivial}
Let $\Sigma\in \mathbf{nsA}(2,2,2,d_C)$ be an assemblage of pure states. If there exists $y_1$ and $y_2$ such that each of the following sets 
$\left\{\sigma_{0a_2|y_1x_2}\right\}_{a_2,x_2}$, $\left\{\sigma_{1a_2|y_1x_2}\right\}_{a_2,x_2}$, $\left\{\sigma_{a_10|x_1y_2}\right\}_{a_1,x_1}$ consists of different rank one operator, then $\Sigma$ is inflexible and not local.
\end{theorem}

Based on the above theorem we may provide an example of quantum (yet non-local) channel assemblage that is an extreme point. 

\begin{example}\label{example_main}
Let consider $\mathcal{L}=\left\{\Lambda_{ab|xy}\right\}_{a,b,x,y}\in \mathbf{q\Lambda}(2,2,2,2)$ where $\Lambda_{ab|xy}:B(H_C)\rightarrow B(H_C)$ admit quantum realization $\Lambda_{ab|xy}(\cdot)=\mathrm{Tr}_{AB}(P_{a|x}\otimes Q_{b|y}\otimes \mathds{1}_C(\mathds{1}_A\otimes\mathcal{E}(|\phi^+_{AB}\rangle \langle \phi^+_{AB}|\otimes \cdot)))$ with two-dimensional Hilbert spaces $H_A, H_B, H_C$, maximally entangled state $|\phi^+_{AB}\rangle \langle \phi^+_{AB}|\in B(H_A)\otimes B(H_B)$, quantum channel $\mathcal{E}(\cdot)=U(\cdot )U^\dagger$ for $U=|0\rangle \langle 0|\otimes \mathds{1}+ |1\rangle \langle 1|\otimes \sigma_x$ and $P_{0|0}=Q_{0|0}=|0\rangle \langle 0|$ and $P_{0|1}=Q_{0|1}=|+\rangle \langle +|$. 
Using Choi-Jamio\l{}kowski isomorphism we obtain
$\sigma_{ab|xy}=\mathcal{J}(\Lambda_{ab|xy})=\mathrm{Tr}_{AB}(P_{a|x}\otimes Q_{b|y}\otimes \mathds{1}_{CC} |\psi_{ABCC}\rangle \langle \psi_{ABCC}|),
$ where $|\psi_{ABCC}\rangle=\frac{1}{2}\left(|0000\rangle+|0011\rangle+|1110\rangle+ |1101\rangle\right)$. Assuming that pairs $a,x$ labels rows and pairs $b,y$ label columns, assemblage of Choi matrices $\Sigma=\left\{\sigma_{ab|xy}\right\}_{a,b,x,y}$ is given in a convenient graphical representation.
\begin{equation}\label{choi_ass}
\Sigma=\frac{1}{4}\begin{pmatrix}
\begin{array}{cc|cc}
 2|\phi\rangle \langle \phi| &  0 & |\phi\rangle \langle \phi| & |\phi\rangle \langle \phi | \\  
 0 & 2|\varphi\rangle \langle \varphi|&|\varphi\rangle \langle \varphi|   &|\varphi\rangle \langle \varphi|\\ \hline
  |\phi\rangle \langle \phi|& |\varphi\rangle \langle \varphi| &|\xi\rangle \langle\xi |&  |\theta\rangle \langle\theta | \\
 |\phi\rangle \langle \phi| &  |\varphi\rangle \langle \varphi|& |\theta\rangle \langle\theta | & |\xi\rangle \langle\xi | 
\end{array}
\end{pmatrix}
\end{equation}with $|\phi\rangle =\frac{1}{\sqrt{2}}(|00\rangle +|11\rangle),|\varphi\rangle =\frac{1}{\sqrt{2}}(|10\rangle +|01\rangle)$ and $|\xi\rangle =\frac{1}{\sqrt{2}}(|\phi\rangle +|\varphi\rangle),|\theta\rangle =\frac{1}{\sqrt{2}}(|\phi\rangle -|\varphi\rangle)$. 

The above assemblage is inflexible and non-local (Theorem \ref{sufficient nontrivial}). Due to the previous discussion initial quantum channel assemblage $\mathcal{L}$ is an extreme point of $\mathbf{ns\Lambda}(2,2,2,2)$ while $\mathcal{L}\notin \mathbf{l\Lambda}(2,2,2,2)$. 
\end{example}

\textit{Multipartie channel steering in an asymmetric scenario.-} Notion of multipartite channel steering can be further modified, when we allow for signaling between certain (but not all) subsystems. For the sake of simplicity, we will consider this asymmetric generalization only in the tripartite case, as this configuration captures all interesting features. Namely, it provide operationally bipartite setting, where subsystem $BC$ are treated as single one (with untrusted part $B$ described by possibly post-quantum theory).

Let us consider subsystem $A$ that will be eventually separated form subsystems $B$ and $C$ with only subsystem $C$ being fully trusted and described by the principles of quantum theory. Similarly to the description introduced in the previous sections, assume that subsystem $C$ initially interacts with subsystem $AB$ and after that subsystems $A$ and $BC$ become separated. Let both untrusted subsystems $A$ and $B$ provide measurements with (chosen at random) settings $x$, $y$ and possible outcomes $a$, $b$ respectively. Probabilistic description of evolution of quantum subsystem $C$ with assumption of no-signaling between $A$ and $BC$ is provided by an \textit{asymmetric channel assemblage}.

\begin{definition}\label{asym-def} Family $\mathcal{L}=\left\{\Lambda_{ab|xy}\right\}_{a,b,x,y}$ consisting of completely positive maps $\Lambda_{ab|xy}:B(H_C)\rightarrow B(H_{\tilde{C}})$ defines an asymmetric no-signaling channel assemblage if i) $\forall_{x,x',y,b} \sum_a \Lambda_{ab|xy}=\sum_a \Lambda_{ab|x'y}$, ii) $
\forall_{y,y',x,a} \sum_b \mathrm{Tr}(\Lambda_{ab|xy})=\sum_b \mathrm{Tr}(\Lambda_{ab|xy'})$, and iii) $
\forall_{x,y} \sum_{a,b} \Lambda_{ab|xy}=\Lambda$ where $\Lambda$ is some fixed quantum channel (i.e. trace preserving map).
\end{definition}

For a particular, fixed scenario $\mathbf{m},\mathbf{k}$ with $d_C,d_{\tilde{C}}$ we will denote set of asymmetric no-signaling channel assemblages by $\mathbf{\tilde{ns}\Lambda}(2,\mathbf{m},\mathbf{k},d_C,d_{\tilde{C}})$. 

By arguments similar to the proof of Theorem \ref{thm_no_sig_map} we obtain the following characterization.

\begin{theorem}\label{thm_no_sig_channe}
Family of linear maps $\mathcal{L}=\left\{\Lambda_{ab|xy}\right\}_{a,b,x,y}$ given by $\Lambda_{ab|xy}:B(H_C)\rightarrow B(H_{\tilde{C}})$ defines an asymmetric no-signaling channel assemblage if and only if a family of positive Choi matrices $\Sigma=\left\{\sigma_{ab|xy}=\mathcal{J}(\Lambda_{ab|xy})\right\}_{ab|xy}$ fulfill the following conditions i) $\forall_{x,x',y,b} \sum_a \sigma_{ab|xy}=\sum_a \sigma_{ab|x'y}$, ii) $\forall_{y,y',x,a} \sum_b \mathrm{Tr}_{\tilde{C}}(\sigma_{ab|xy})=\sum_b \mathrm{Tr}_{\tilde{C}}(\sigma_{ab|xy'})$, and iii) $
\forall_{x,y} \sum_{a,b} \sigma_{ab|xy}=\sigma$ where $\sigma\in B(H_{\tilde{C}})\otimes B(H_C)$ is some fixed state such that $\mathrm{Tr}_{\tilde{C}}(\sigma)=\frac{\mathds{1}}{d_C}$.
\end{theorem}

Note that under provided assumptions (i.e. Definition \ref{asym-def}) not all examples of channel assemblages that are extremal in multipartite paradigm, remain that way in the asymmetric scenario (however, one can still construct such examples - see Appendix \ref{appB}).

Indeed, consider channel assemblage $\mathcal{L}=\left\{\Lambda_{ab|xy}\right\}_{a,b,x,y}$ form Example \ref{example_main}, now as an element of $\mathbf{\tilde{ns}\Lambda}(2,2,2,2)$. Assume that $\Sigma=\mathcal{J}(\Lambda)$ and that we have a convex decomposition $\Sigma=p\Sigma^{(1)}+(1-p)\Sigma^{(2)}$. Due to conditions i)-iii) in Definition \ref{asym-def}, any $\Sigma^{(i)}$ must be of the form 
\begin{equation}
\Sigma^{(i)}=\frac{1}{4}\begin{pmatrix}
\begin{array}{cc|cc}
 2\alpha|\phi\rangle \langle \phi| &  0 & \eta|\phi\rangle \langle \phi| & \kappa|\phi\rangle \langle \phi | \\  
 0 & 2\delta|\varphi\rangle \langle \varphi|&\eta|\varphi\rangle \langle \varphi|   &\kappa|\varphi\rangle \langle \varphi|\\ \hline
\beta|\phi\rangle \langle \phi|& \epsilon|\varphi\rangle \langle \varphi| &\eta|\xi\rangle \langle\xi |&  \kappa|\theta\rangle \langle\theta | \\
 \gamma|\phi\rangle \langle \phi| &  \zeta|\varphi\rangle \langle \varphi|& \eta|\theta\rangle \langle\theta | & \kappa|\xi\rangle \langle\xi | 
\end{array}
\end{pmatrix}
\end{equation}where in particular $\alpha,\beta,\gamma,\delta,\epsilon,\zeta,\eta,\kappa,\geq 0$, $2\alpha=\eta+\kappa$, $2\delta=\eta+\kappa$ and $\alpha+\delta=2$ (one can consider only single coefficients $\eta,\kappa$ for the whole third and fourth column respectively since both columns consist of four different rank one operators). Note that this leads to $\alpha=\delta=1$.


Nevertheless, when $\beta=\frac{3}{2}, \gamma=\frac{1}{2}, \zeta=\frac{3}{2}, \varepsilon=\frac{1}{2}$ for $\Sigma^{(1)}$ and $\beta=\frac{1}{2}, \gamma=\frac{3}{2}, \zeta=\frac{1}{2}, \epsilon=\frac{3}{2}$ for $\Sigma^{(2)}$ ($\alpha,\delta,\eta,\kappa=1$) we have $\Sigma=\frac{1}{2}\Sigma^{(1)}+\frac{1}{2}\Sigma^{(2)}$ while $\Sigma\neq\Sigma^{(1)}\neq\Sigma^{(2)}$. In the end, $\Sigma$ in not extremal 
and so is $\Lambda$.


However, the above reasoning shows that for any $a,b=0,1$ we get
\begin{equation}\label{relation_x}
\Lambda_{ab|00}=\Lambda^{(i)}_{ab|00}
\end{equation}where $\Lambda^{(i)}$ stands for asymmetric channel assemblage related to assemblage of Choi matrices from considered decomposition. 

Considered paradigm is an example of a post-quantum scenario in which quantum realizable elements (channel assemblages) could provide security against no-signaling eavesdropper in certain cryptographic tasks. Moreover, looking form operational perspective, as classical input and output of subsystem $B$ may be treated as a description of subsystem $C$, this scenario can be seen as a practically bipartite experimental setup, where such security is presented.

Indeed, putting initial state as $\rho=|0\rangle \langle 0|$ and given measurements as $M_{0|0}=|0\rangle \langle 0|,M_{0|0}=|1\rangle \langle 1| $ we obtain correlations (for fixed $x=0$, $y=0$,$z=0$) 
\begin{equation}\label{key_1}
p(000|000)=\mathrm{Tr}(|0\rangle \langle 0|\Lambda_{00|00}(|0\rangle \langle 0|))=\frac{1}{2}
\end{equation}
\begin{equation}\label{key_2}
p(111|000)=\mathrm{Tr}(|1\rangle \langle 1|\Lambda_{11|00}(|0\rangle \langle 0|))=\frac{1}{2}
\end{equation}that seen as a bipartite correlations (in a cut $A|BC$) provide a perfect key. Note that with channel assemblage as a resource sharing between $A$ and $BC$ attack of the eavesdropper may be simulated by a family of completely positive maps $\mathcal{\tilde{L}}=\left\{\Lambda_{abe|xy0}\right\}_{a,b,e,x,y}$ (extension of considered channel assemblage $\mathcal{L}$) such that
i) $\forall_{x,x',y,b,e} \sum_a \Lambda_{abe|xy0}=\sum_a \Lambda_{abe|x'y0}$, ii) $\forall_{y,y',x,a,e} \sum_b \mathrm{Tr}(\Lambda_{abe|xy0})=\sum_b \mathrm{Tr}(\Lambda_{abe|xy'0})$ and $\forall_{x,y} \sum_{a,b,e} \Lambda_{abe|xy0}=\Lambda$ and iv) $\forall_{e} \Lambda_{abe|xy0}=q_e\Lambda^{(e)}_{ab|xy}$ for some $q_e\geq 0$ and no-signaling channel assemblage $\mathcal{L}^{(e)}=\left\{\Lambda^{(e)}_{ab|xy}\right\}_{a,b,x,y}$. In other words such attack is given by a convex combination $\Lambda_{ab|xy}=\sum_{e}q_e\Lambda^{(e)}_{ab|xy}$ where no-signaling channel assemblages $\mathcal{L}^{(e)}$ introduce biased (toward bit $0$ or $1$) versions of correlations (\ref{key_1}) and (\ref{key_2}). However, this is impossible as (\ref{relation_x}) holds and attack (within proposed paradigm) cannot be successful. Observe that (\ref{relation_x}) remains valid even with vi) removed in the above description.

Similar observation can be made in a different paradigm, when quantum channel behavior (acting from subsystem $C$) is given, while possible attack is simulated by the convex decomposition into \textit{asymmetric hybrid channel behaviors} $\mathcal{P}^{(e)}=\left\{P^{(e)}_{abc|xyz}\right\}_{a,b,c,x,y,z}$, where $P^{(e)}_{abc|xyz}(\cdot)=\mathrm{Tr}_{\tilde{C}}(\tilde{M}^{(e)}_{c|z}\tilde{\Lambda}^{(e)}_{ab|xy}(\cdot))$ with some asymmetric channel assemblage $\mathcal{L}^{(e)}=\left\{\Lambda^{(e)}_{ab|xy}\right\}_{a,b,x,y}$ (acting between subsystem $C$ and arbitrary $\tilde{C}$) and some POVM elements $M^{(e)}_{c|z}$ (acting on subsystem $\tilde{C}$).

Indeed, one can show (see calculations in Appendix \ref{appC}) that for $\mathcal{P}=\left\{P_{abc|xyz}\right\}_{a,b,c,x,y,z}$ with $P_{abc|xyz}(\cdot)=\mathrm{Tr}(M_{c|z}\Lambda_{ab|xy}(\cdot))$ where $\mathcal{L}=\left\{\Lambda_{ab|xy}\right\}_{a,b,x,y}$ is as in Example \ref{example_main} and $M_{0|0}=|0\rangle \langle 0|$, $M_{0|1}=|+\rangle \langle +|$, we have $P_{abc|000}=P^{(e)}_{abc|000}$ for all $a,b,c$ and any $P^{(e)}$ from such convex decomposition. Moreover, $p(abc|000)=P_{abc|000}(|0\rangle \langle 0|)$ fulfill (\ref{key_1}, \ref{key_2}).


\color{black}

It is finally worth noting, that the analogous construction of a perfect key obtained from bipartite quantum correlations with security against adversary equipped with resource in a form of no-signaling correlations remains an open problem.

\textit{Discussion.-} We introduced and analyzed concept of multipartite no-signaling channel assemblages and related subclasses of channel assemblages also in the relaxed framework with no-signaling constrains binding only some parties. 

Despite these results, there are still some open questions related to addressed topics. First of all, it would we interesting to provide an expansion of presented results in terms of not only sufficient but sufficient and necessary conditions for extremality among no-signaling channel assemblages (at least in the simplest nontrivial case) as well. Similarly, further characterization of extremality within paradigm of asymmetric channel steering is also needed. Finally, from operational perspective, it also would be important to propose concrete cryptographic protocols based on structure of considered convex sets.


\begin{acknowledgments}
\textit{Acknowledgments}- M.B. and P.H. acknowledge support by the Foundation for Polish Science (IRAP project, ICTQT, contract no. MAB/2018/5, co-financed by EU within Smart Growth Operational Programme). M. B. acknowledges partial support from by DFG (Germany) and NCN (Poland) within the joint funding initiative Beethoven2 (Grant No. 2016/23/G/ST2/04273). R.R. acknowledges support from the Start-up Fund 'Device-Independent Quantum Communication Networks' from The University of Hong Kong, the Seed Fund 'Security of Relativistic Quantum Cryptography' and the Early Career Scheme (ECS) grant 'Device-Independent Random Number Generation and Quantum Key Distribution with Weak Random Seeds'.
\end{acknowledgments}

\appendix
\section{Proof of Theorem \ref{thm_no_sig_map}}\label{appA}

In order to justify characterization given in the main text and provide a proof for Theorem \ref{thm_no_sig_map} let us evoke the following description of no-signaling assemblages \cite{Lewenstein, SAPHS18} and simple auxiliary lemma.

\begin{theorem}[\cite{Lewenstein, SAPHS18}]\label{W_Hermitian}

Family of positive opeartors $\Sigma=\left\{\sigma_{\mathbf{a}_n|\mathbf{x}_n}\right\}_{\mathbf{a}_n, \mathbf{x}_n}$ defines a no-signaling assemblage if and only if there exist a Hermitian operator $W\in \bigotimes_{i=1}^{n}B(H_{A_i})\otimes B(H_{C})$ and POVM elements $M^{(i)}_{a_i|x_i}\in B(H_{A_i})$ for which $\sigma_{\mathbf{a}_n|\mathbf{x}_n}=\mathrm{Tr}_{A_1,\ldots, A_n}(M^{(1)}_{a_1|x_1}\otimes\ldots \otimes M^{(n)}_{a_n|x_n}\otimes \mathds{1}_{C}W)$. 
\end{theorem}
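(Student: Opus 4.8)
The plan is to prove both implications. The forward direction (existence of $W$ forces the no-signaling conditions) is a short verification, while the converse (every no-signaling assemblage is realized by some Hermitian $W$) is the substantive part, which I would settle by a dimension count rather than by an explicit ansatz.

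For the easy direction, suppose $\sigma_{\mathbf{a}_n|\mathbf{x}_n}=\mathrm{Tr}_{A_1,\ldots,A_n}\big[\bigotimes_i M^{(i)}_{a_i|x_i}\otimes\mathds{1}_C\,W\big]$. Summing over the outcomes $a_j$ of the parties $j\notin I$ and using the POVM property $\sum_{a_j}M^{(j)}_{a_j|x_j}=\mathds{1}_{A_j}$ replaces each such tensor factor by $\mathds{1}_{A_j}$, so the resulting marginal involves only the operators $M^{(i)}_{a_i|x_i}$ with $i\in I$ and is manifestly independent of the settings $x_j$, $j\notin I$; this is precisely \eqref{def12}. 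Taking the sum over all outcomes gives $\sum_{\mathbf{a}_n}\sigma_{\mathbf{a}_n|\mathbf{x}_n}=\mathrm{Tr}_{A_1,\ldots,A_n}[W]=:\sigma$, independent of $\mathbf{x}_n$, which is \eqref{def11}. Hence $\Sigma$ is a no-signaling assemblage.

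For the converse I would fix, for each untrusted party $i$, a Hilbert space $H_{A_i}$ of sufficiently large dimension together with POVMs $\{M^{(i)}_{a_i|x_i}\}$ whose elements are linearly independent modulo the only forced relations, namely $\sum_{a_i}M^{(i)}_{a_i|x_i}=\mathds{1}_{A_i}$ for each $x_i$. These supply $m_i-1$ homogeneous relations among the $m_ik_i$ operators, so their real span $S_i$ has dimension $m_i(k_i-1)+1$. Now consider the real-linear measurement map $\Phi(W)_{\mathbf{a}_n|\mathbf{x}_n}=\mathrm{Tr}_{A_1,\ldots,A_n}\big[\bigotimes_i M^{(i)}_{a_i|x_i}\otimes\mathds{1}_C\,W\big]$ on Hermitian operators $W$. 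Since $\Phi$ factorises as the $A$-side contraction tensored with the identity on the space of Hermitian operators on $H_C$ (real dimension $d_C^2$), and since $\{\bigotimes_i M^{(i)}_{a_i|x_i}\}$ spans $\bigotimes_i S_i$, the image of $\Phi$ has dimension $\big(\prod_i(m_i(k_i-1)+1)\big)\,d_C^2$.

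It then remains to compare this with the dimension of the space $NS$ of Hermitian-operator-valued families obeying the linear constraints \eqref{def11}--\eqref{def12}. This is the operator-valued analogue of the standard no-signaling polytope: the marginal-consistency constraints are independent in exactly the way that cuts the naive parameter count down to $\big(\prod_i(m_i(k_i-1)+1)\big)\,d_C^2$. By the easy direction $\mathrm{Image}(\Phi)\subseteq NS$, and since the two dimensions coincide we obtain $\mathrm{Image}(\Phi)=NS$. Any no-signaling assemblage lies in $NS$ (its positivity is an extra hypothesis we simply carry along, while $W$ itself need only be Hermitian), hence equals $\Phi(W)$ for some Hermitian $W$, which is the claimed representation. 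The main obstacle is precisely the two dimension computations and their matching: one must check that local POVMs attaining the maximal span dimension $m_i(k_i-1)+1$ exist (a genericity argument valid once $\dim H_{A_i}$ is large enough) and that the no-signaling constraints are mutually independent in the manner yielding the same product dimension—after which the inclusion from the easy direction upgrades to equality for free. (Alternatively, one can produce $W$ explicitly as $W=\sum_{\mathbf{a}_n,\mathbf{x}_n}\big(\bigotimes_i D^{(i)}_{a_i|x_i}\big)\otimes\sigma_{\mathbf{a}_n|\mathbf{x}_n}$ for a dual frame $\{D^{(i)}_{a_i|x_i}\}$ to the $\{M^{(i)}_{a_i|x_i}\}$, the no-signaling conditions being exactly what makes the reconstruction consistent.)
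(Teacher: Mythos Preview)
The paper does not prove Theorem~\ref{W_Hermitian}; it is quoted from \cite{Lewenstein,SAPHS18} and used as a black box. So there is no in-paper proof to compare against.

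Your argument is correct in outline. The easy direction is exactly as you write. For the converse, the dimension count goes through: with $d_{A_i}^2\ge m_i(k_i-1)+1$ one can choose POVMs whose elements span a subspace $S_i$ of that dimension (the only linear dependencies being forced by the $m_i$ completeness relations, which impose $m_i-1$ independent homogeneous constraints because all $m_i$ sums equal the same operator $\mathds{1}_{A_i}$); the image of $\Phi$ then has real dimension $\prod_i(m_i(k_i-1)+1)\cdot d_C^2$. The linear space of Hermitian-operator-valued families obeying \eqref{def11}--\eqref{def12} (with the trace-one normalisation dropped) has the same dimension, by the operator-valued version of the standard no-signalling affine-dimension count. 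Since the easy direction gives $\mathrm{Image}(\Phi)\subseteq NS$, equality follows and every no-signalling assemblage is hit by some Hermitian $W$.

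By way of comparison, the cited source \cite{Lewenstein} proceeds constructively rather than by dimension counting: it fixes explicit local operators playing the role of your dual frame $\{D^{(i)}_{a_i|x_i}\}$ and writes $W$ directly as the sum you mention at the end, checking that the no-signalling conditions are precisely what makes the reconstruction consistent. Your main argument is thus a genuinely different (non-constructive) route; it has the virtue of explaining \emph{why} the representation must exist without any clever ansatz, at the cost of not exhibiting $W$ explicitly and of requiring the two dimension computations you flag as the remaining work. The explicit dual-frame construction you sketch parenthetically is essentially the approach of \cite{Lewenstein,SAPHS18}.
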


\begin{lem}\label{ext_krauss}Let $\mathcal{E}:B(H_A)\rightarrow B(H_{\tilde{A}})\otimes B(H_B)$ be a completely positive and trace preserving map. There exist some state $\tilde{\rho}_B\in  B(H_B)$ and some completely positive and trace preserving map $\tilde{\mathcal{E}}:B(H_A)\otimes B(H_B)\rightarrow B(H_{\tilde{A}})\otimes B(H_B)$ such that
\begin{equation}\label{cel}
\mathcal{E}(\rho_A)=\tilde{\mathcal{E}}(\rho_A\otimes \tilde{\rho}_B)
\end{equation}for all states $\rho_A\in B(H_A)$.
\end{lem}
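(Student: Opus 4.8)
The plan is to \emph{dilate} the input of $\mathcal{E}$ so that the factor $B$, which in $\mathcal{E}$ appears only in the output, is instead supplied as a fixed ancilla on the input side. I will describe two constructions. The first bypasses the Kraus picture entirely: I take $\tilde{\rho}_B$ to be an arbitrary state of $B$ (for definiteness, the maximally mixed state) and set $\tilde{\mathcal{E}}:=\mathcal{E}\circ\mathrm{Tr}_B$, i.e.
\begin{equation}
\tilde{\mathcal{E}}(X_{AB})=\mathcal{E}\!\left(\mathrm{Tr}_B X_{AB}\right),\qquad X_{AB}\in B(H_A)\otimes B(H_B).
\end{equation}
The second construction follows the hint in the statement and yields an explicit Kraus representation; it requires a little more care.

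For the first construction the verification is immediate. The partial trace $\mathrm{Tr}_B$ is completely positive and trace preserving, and $\mathcal{E}$ is CPTP by assumption, so their composition $\tilde{\mathcal{E}}$ is again CPTP from $B(H_A)\otimes B(H_B)$ to $B(H_{\tilde{A}})\otimes B(H_B)$. Evaluating on a product input and using $\mathrm{Tr}(\tilde{\rho}_B)=1$ gives $\mathrm{Tr}_B(\rho_A\otimes\tilde{\rho}_B)=\rho_A$, whence $\tilde{\mathcal{E}}(\rho_A\otimes\tilde{\rho}_B)=\mathcal{E}(\rho_A)$ for every state $\rho_A$, which is exactly \eqref{cel}.

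The Kraus route, which makes the ancilla realization more explicit, proceeds as follows. I fix Kraus operators $\{K_\mu\}$ of $\mathcal{E}$, so that $\mathcal{E}(\rho_A)=\sum_\mu K_\mu\rho_A K_\mu^\dagger$ with $\sum_\mu K_\mu^\dagger K_\mu=\mathds{1}_A$, and I take $\tilde{\rho}_B=|\psi\rangle\langle\psi|$ to be a fixed pure state of $B$. Setting $\tilde{K}_\mu:=K_\mu(\mathds{1}_A\otimes\langle\psi|)$ one checks that $\sum_\mu \tilde{K}_\mu(\rho_A\otimes|\psi\rangle\langle\psi|)\tilde{K}_\mu^\dagger=\mathcal{E}(\rho_A)$, so these operators already reproduce the desired action. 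The main obstacle is that they do not by themselves yield a \emph{trace preserving} map: one finds $\sum_\mu \tilde{K}_\mu^\dagger\tilde{K}_\mu=\mathds{1}_A\otimes|\psi\rangle\langle\psi|$, which falls short of $\mathds{1}_{AB}$ precisely by $\mathds{1}_A\otimes(\mathds{1}_B-|\psi\rangle\langle\psi|)$.

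I would close this gap by appending further Kraus operators supported entirely on the orthogonal complement of $|\psi\rangle$. Concretely, choosing orthonormal bases $\{|a\rangle\}$ of $H_A$ and $\{|\perp_k\rangle\}$ of $(\mathds{1}_B-|\psi\rangle\langle\psi|)H_B$, together with any fixed unit vector $|\chi\rangle\in H_{\tilde{A}}\otimes H_B$, the operators $L_{a,k}:=|\chi\rangle\langle a|\otimes\langle\perp_k|$ satisfy $\sum_{a,k}L_{a,k}^\dagger L_{a,k}=\mathds{1}_A\otimes(\mathds{1}_B-|\psi\rangle\langle\psi|)$ and annihilate the intended input $\rho_A\otimes|\psi\rangle\langle\psi|$, since $\langle\perp_k|\psi\rangle=0$. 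Hence the combined family $\{\tilde{K}_\mu\}\cup\{L_{a,k}\}$ defines a CPTP map $\tilde{\mathcal{E}}$ whose action on $\rho_A\otimes\tilde{\rho}_B$ is still $\mathcal{E}(\rho_A)$, again establishing \eqref{cel}. Of the two routes the first is essentially obstruction-free, whereas in the second the only genuine work is the trace-preservation completion just described.
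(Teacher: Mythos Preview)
Your proof is correct. Both constructions work: the first is an immediate consequence of the fact that the partial trace is CPTP and compositions of CPTP maps are CPTP, while the second carries out the Kraus completion carefully and correctly.

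The paper does not actually supply a proof of this lemma; it only remarks that the result is ``based on the Kraus representation of completely positive and trace preserving maps.'' Your second construction is presumably what the authors had in mind, and you have handled the one nontrivial point (the trace-preservation completion on the orthogonal complement of the ancilla state) cleanly. Your first construction, by contrast, sidesteps Kraus operators entirely by writing $\tilde{\mathcal{E}}=\mathcal{E}\circ\mathrm{Tr}_B$; this is strictly simpler, requires no case analysis or completion argument, and works for any choice of ancilla state. The Kraus route has the minor advantage of making the dilation explicit at the operator level (and of generalizing naturally if one wanted, say, an isometric dilation), but for the purposes of this lemma the trace-out construction is the more economical argument.
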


Let $\mathcal{L}=\left\{\Lambda_{\mathbf{a}_n|\mathbf{x}_n}\right\}_{\mathbf{a}_n, \mathbf{x}_n}\in \mathbf{ns\Lambda}(n,\mathbf{m},\mathbf{k},d_C,d_{\tilde{C}})$. In that case $\mathcal{J}(\Lambda_{\mathbf{a}_n|\mathbf{x}_n})\in B(H_{\tilde{C}})\otimes B(H_{C})$ are positive operators such that, according to the linearity of $\mathcal{J}$, fulfill relations (\ref{def11}) and (\ref{def12}) in the main text. Moreover, $\mathcal{J}(\sum_{\mathbf{a}_n}\Lambda_{\mathbf{a}_n|\mathbf{x}_n})=\mathcal{J}(\Lambda)$ and $\mathrm{Tr}_{\tilde{C}}(\mathcal{J}(\Lambda))=\frac{\mathds{1}}{d_{C}}$, as $\Lambda$ is completely positive and trace preserving. 

Conversely, assume that $\Sigma=\left\{\mathcal{J}(\Lambda_{\mathbf{a}_n|\mathbf{x}_n})\right\}_{\mathbf{a}_n, \mathbf{x}_n}\in \mathbf{nsA}(n,\mathbf{m},\mathbf{k},d_Cd_{\tilde{C}})$ and $\mathrm{Tr}_{\tilde{C}}(\mathcal{J}(\Lambda))=\frac{\mathds{1}}{d_{C}}$. If so then $\Lambda$ is a completely positive and trace preserving map and according to the linearity of $\mathcal{J}^{-1}$ completely positive maps $\Lambda_{\mathbf{a}_n|\mathbf{x}_n}=\mathcal{J}^{-1}(\mathcal{J}(\Lambda_{\mathbf{a}_n|\mathbf{x}_n}))$ fulfill constraints stated in Definition \ref{NSK-def} in the main text. This proves the first statement regarding no-signaling realization.

The second part is a direct consequence of the above reasoning, characterization of no-signaling assemblages presented in Theorem \ref{W_Hermitian} and the fact that for any no-signaling assemblage $\Sigma=\left\{\sigma_{\mathbf{a}_n|\mathbf{x}_n}\right\}_{\mathbf{a}_n, \mathbf{x}_n}\in \mathbf{nsA}(n,\mathbf{m},\mathbf{k},d_{\tilde{C}}d_C)$ of the form
\begin{equation}
\sigma_{\mathbf{a}_n|\mathbf{x}_n}=\mathrm{Tr}_{A_1,\ldots, A_n}(M^{(1)}_{a_1|x_1}\otimes\ldots \otimes M^{(n)}_{a_n|x_n}\otimes \mathds{1}_{\tilde{C}C}W),
\end{equation}we have $\mathrm{Tr}_{A_1,\ldots, A_n,\tilde{C}}(W)=\mathrm{Tr}_{\tilde{C}}(\sum_{\mathbf{a}_n}\sigma_{\mathbf{a}_n|\mathbf{x}_n})$.
This concludes part of the proof related to no-signaling constrains.

To tackle part regarding quantum realization, assume now that $\mathcal{L}=\left\{\Lambda_{\mathbf{a}_n|\mathbf{x}_n}\right\}_{\mathbf{a}_n, \mathbf{x}_n}\in \mathbf{q\Lambda}(n,\mathbf{m},\mathbf{k},d_C,d_{\tilde{C}})$, then for any state $\rho_C\in B(H_C)$ we have
\begin{widetext}
\begin{equation}
\Lambda_{\mathbf{a}_n|\mathbf{x}_n}(\rho_C)=\mathrm{Tr}_{A_1,\ldots, A_n}(M^{(1)}_{a_1|x_1}\otimes\ldots \otimes M^{(n)}_{a_n|x_n}\otimes \mathds{1}_{\tilde{C}}(\mathcal{E}(\rho_{A_1,\ldots, A_n}\otimes \rho_C))),
\end{equation}
\end{widetext}where $\rho_{A_1,\ldots, A_n}\in \otimes^n_i B(H_{A_i})$ is some fixed state, 
$M^{(i)}_{a_i|x_i}$ are some POVM elements, and $\mathcal{E}:\otimes^n_i B(H_{A_i})\otimes  B(H_C)\rightarrow  \otimes^n_iB(H_{A_i})\otimes B(H_{\tilde{C}})$ is some completely positive and trace preserving map. From this we can write
\begin{equation}
\mathcal{J}(\Lambda_{\mathbf{a}_n|\mathbf{x}_n})=\mathrm{Tr}_{A_1,\ldots, A_n}(M^{(1)}_{a_1|x_1}\otimes\ldots \otimes M^{(n)}_{a_n|x_n}\otimes \mathds{1}_{\tilde{C}C}\rho)
\end{equation}for a state $\rho=\mathcal{E}\otimes \mathrm{id}_{C}(\rho_{A_1,\ldots, A_n}\otimes |\phi^+_{CC}\rangle \langle \phi^+_{CC}|)\in \bigotimes_{i=1}^{n}B(H_{A_i})\otimes B(H_{\tilde{C}})\otimes B(H_{C})$ that provides a quantum realization of $\Sigma=\left\{\mathcal{J}(\Lambda_{\mathbf{a}_n|\mathbf{x}_n})\right\}_{\mathbf{a}_n, \mathbf{x}_n}$. Moreover,
\begin{widetext}
\begin{equation}
\mathrm{Tr}_{\tilde{C}}(\mathcal{J}(\sum_{\mathbf{a}_n}\Lambda_{\mathbf{a}_n|\mathbf{x}_n}))=\mathrm{Tr}_{A_1,\ldots, A_n, \tilde{C}}(\rho)=\mathrm{Tr}_{\tilde{C}}(|\phi^+_{CC}\rangle \langle \phi^+_{CC}|)=\frac{\mathds{1}}{d_{C}}.
\end{equation}
\end{widetext}

Conversely, let $\Sigma=\left\{\mathcal{J}(\Lambda_{\mathbf{a}_n|\mathbf{x}_n})\right\}_{\mathbf{a}_n, \mathbf{x}_n}\in \mathbf{nsA}(n,\mathbf{m},\mathbf{k},d_{\tilde{C}}d_C)$ and $\mathrm{Tr}_{\tilde{C}}(\mathcal{J}(\Lambda))=\frac{\mathds{1}}{d_{C}}$. In this case, there exists a quantum realization
\begin{equation}
\mathcal{J}(\Lambda_{\mathbf{a}_n|\mathbf{x}_n})=\mathrm{Tr}_{A_1,\ldots, A_n}(M^{(1)}_{a_1|x_1}\otimes\ldots \otimes M^{(n)}_{a_n|x_n}\otimes \mathds{1}_{\tilde{C}C}\rho),
\end{equation}with a state $\rho \in \bigotimes_{i=1}^{n}B(H_{A_i})\otimes B(H_{\tilde{C}})\otimes B(H_{C})$ such that $\mathrm{Tr}_{A_1,\ldots, A_n,\tilde{C}}(\rho)=\frac{\mathds{1}}{d_{C}}$. According to that $\mathcal{J}^{-1}(\rho):B(H_C)\rightarrow \bigotimes_{i=1}^{n}B(H_{A_i})\otimes B(H_{\tilde{C}})$ defines a completely positive and trace preserving map. For arbitrary state $\rho_C\in B(H_C)$ we have
\begin{widetext}
\begin{equation}
\Lambda_{\mathbf{a}_n|\mathbf{x}_n}(\rho_C)=\mathcal{J}^{-1}(\mathcal{J}(\Lambda_{\mathbf{a}_n|\mathbf{x}_n}))(\rho_C)=\mathrm{Tr}_{A_1,\ldots, A_n}(M^{(1)}_{a_1|x_1}\otimes\ldots \otimes M^{(n)}_{a_n|x_n}\otimes \mathds{1}_{\tilde{C}}(\mathcal{J}^{-1}(\rho)(\rho_C))).
\end{equation}
\end{widetext}Because of Lemma \ref{ext_krauss} there exist a state $\rho_{A_1,\ldots, A_n}\in \bigotimes_{i=1}^{n}B(H_{A_i})$ and a completely positive and trace preserving map $\mathcal{E}:\bigotimes_{i=1}^{n}B(H_{A_i})\otimes B(H_C)\rightarrow \bigotimes_{i=1}^{n}B(H_{A_i})\otimes B(H_{\tilde{C}})$ such that
\begin{equation}
\mathcal{J}^{-1}(\rho)(\rho_C)=\mathcal{E}(\rho_{A_1,\ldots, A_n}\otimes \rho_C)
\end{equation}for any state $\rho_C\in B(H_C)$. If so, then
\begin{widetext}
\begin{equation}
\Lambda_{\mathbf{a}_n|\mathbf{x}_n}(\rho_C)=\mathrm{Tr}_{A_1,\ldots, A_n}(M^{(1)}_{a_1|x_1}\otimes\ldots \otimes M^{(n)}_{a_n|x_n}\otimes \mathds{1}_{\tilde{C}}(\mathcal{E}(\rho_{A_1,\ldots, A_n}\otimes \rho_C))),
\end{equation}
\end{widetext}which ends the proof of the first part of the thesis regarding quantum channel assemblages.

The second part of the thesis concerning quantum description follows from the argument analogous to the one presented in the no-signaling case.

By a reasoning similar to the one above and the fact that any assemblage with local hidden state model admits quantum realization with measurements performed on some fully separable state we conclude the proof.

\section{Example of quantum extreme point in asymmetric scenario}\label{appB}

Consider a asymmetric qunatum channel assemblage $\Lambda=\left\{\Lambda_{ab|xy}\right\}_{a,b,x,y}$ of qubit maps $\Lambda_{ab|xy}:B(H_C)\rightarrow B(H_C)$ with quantum realization given as
\begin{widetext}
\begin{equation}
\Lambda_{ab|xy}(\cdot)=\mathrm{Tr}_{AB}(P_{a|x}\otimes Q_{b|y}\otimes \mathds{1}_C(\mathds{1}_A\otimes\mathcal{E}_{CNOT}(|\phi^+_{AB}\rangle \langle \phi^+_{AB}|\otimes \cdot)))
\end{equation}
\end{widetext}with $|\phi^+_{AB}\rangle \langle \phi^+_{AB}|$ being a maximally entangled state of two qubits, $\mathcal{E}(\cdot)=U_{CNOT}(\cdot )U_{CNOT}^*$ and $Q_{0|0}=|\theta_1\rangle \langle \theta_1|,\ Q_{1|0}=|\theta_2\rangle \langle \theta_2|$, $P_{0|0}=|\theta_3\rangle \langle \theta_3|,\ P_{1|0}=|\theta_4\rangle \langle \theta_4|$, $Q_{0|1}=P_{0|1}=|+\rangle \langle +|, Q_{1|1}=P_{1|1}=|-\rangle \langle -|$ where
\begin{equation}\nonumber
|\theta_1\rangle=\frac{1}{\sqrt{5}}|0\rangle+\frac{2}{\sqrt{5}}|1\rangle,\ |\theta_2\rangle=\frac{2}{\sqrt{5}}|0\rangle-\frac{1}{\sqrt{5}}|1\rangle,
\end{equation}
\begin{equation}\nonumber
|\theta_3\rangle=\frac{1}{\sqrt{3}}|0\rangle+\sqrt{\frac{2}{3}}|1\rangle,\ |\theta_4\rangle=\sqrt{\frac{2}{3}}|0\rangle-\frac{1}{\sqrt{3}}|1\rangle.
\end{equation}
Define related assemblages of Choi matrices by $\mathcal{J}(\Lambda_{ab|xy})=\Lambda_{ab|xy}\otimes \mathrm{id}_C(|\phi^+_{CC}\rangle \langle \phi^+_{CC}|)=\mathrm{Tr}_{AB}(P_{a|x}\otimes Q_{b|y}\otimes \mathds{1}_{CC} |\psi_{ABCC}\rangle \langle \psi_{ABCC}|)$ with
\begin{equation}
|\psi_{ABCC}\rangle=\frac{1}{2}\left(|0000\rangle+|0011\rangle+|1110\rangle+ |1101\rangle\right).
\end{equation}Explicit form of this assemblage $\Sigma=\left\{\mathcal{J}(\Lambda_{ab|xy})\right\}_{a,b,x,y}$ of Choi matrices is given as
\begin{equation}\nonumber
\Sigma=\begin{pmatrix}
\begin{array}{cc|cc}
 |\phi_1\rangle \langle \phi_1| &   |\phi_2\rangle \langle \phi_2| &  |\phi_5\rangle \langle \phi_5| &  |\phi_6\rangle \langle \phi_6| \\  
  |\phi_3\rangle \langle \phi_3| &  |\phi_4\rangle \langle \phi_4|& |\phi_7\rangle \langle \phi_7|   & |\phi_8\rangle \langle \phi_8|\\ \hline
 |\phi_9\rangle \langle \phi_9| & |\phi_{10}\rangle \langle \phi_{10}| &|\phi_{13}\rangle \langle \phi_{13}|&  |\phi_{14}\rangle \langle \phi_{14}|\\
|\phi_{11}\rangle \langle \phi_{11}| & |\phi_{12}\rangle \langle \phi_{12}|& |\phi_{15}\rangle \langle \phi_{15}|& |\phi_{16}\rangle \langle \phi_{16}|
\end{array}
\end{pmatrix}
\end{equation}where
\begin{equation}\nonumber
|\phi_{1}\rangle=\frac{1}{2\sqrt{15}}\left(|00\rangle+|11\rangle+2\sqrt{2}|10\rangle+2\sqrt{2}|01\rangle\right),
\end{equation}
\begin{equation}\nonumber
|\phi_{2}\rangle=\frac{1}{2\sqrt{15}}\left(2|00\rangle+2|11\rangle-\sqrt{2}|10\rangle-\sqrt{2}|01\rangle\right),
\end{equation}
\begin{equation}\nonumber
|\phi_{3}\rangle=\frac{1}{2\sqrt{15}}\left(\sqrt{2}|00\rangle+\sqrt{2}|11\rangle-2|10\rangle-2|01\rangle\right),
\end{equation}
\begin{equation}\nonumber
|\phi_{4}\rangle=\frac{1}{2\sqrt{15}}\left(2\sqrt{2}|00\rangle+2\sqrt{2}|11\rangle+|10\rangle+|01\rangle\right),
\end{equation}
\begin{equation}\nonumber
|\phi_{5}\rangle=\frac{1}{2\sqrt{6}}\left(|00\rangle+|11\rangle+\sqrt{2}|10\rangle+\sqrt{2}|01\rangle\right),
\end{equation}
\begin{equation}\nonumber
|\phi_{6}\rangle=\frac{1}{2\sqrt{6}}\left(\sqrt{2}|00\rangle+\sqrt{2}|11\rangle-|10\rangle-|01\rangle\right),
\end{equation}
\begin{equation}\nonumber
|\phi_{7}\rangle=\frac{1}{2\sqrt{6}}\left(\sqrt{2}|00\rangle+\sqrt{2}|11\rangle+\sqrt{2}|10\rangle+\sqrt{2}|01\rangle\right),
\end{equation}
\begin{equation}\nonumber
|\phi_{8}\rangle=\frac{1}{2\sqrt{6}}\left(\sqrt{2}|00\rangle+\sqrt{2}|11\rangle+|10\rangle+|01\rangle\right),
\end{equation}
\begin{equation}\nonumber
|\phi_{9}\rangle=\frac{1}{2\sqrt{10}}\left(|00\rangle+|11\rangle+2|10\rangle+2|01\rangle\right),
\end{equation}
\begin{equation}\nonumber
|\phi_{10}\rangle=\frac{1}{2\sqrt{10}}\left(2|00\rangle+2|11\rangle-|10\rangle-|01\rangle\right),
\end{equation}
\begin{equation}\nonumber
|\phi_{11}\rangle=\frac{1}{2\sqrt{10}}\left(|00\rangle+|11\rangle-2|10\rangle-2|01\rangle\right),
\end{equation}
\begin{equation}\nonumber
|\phi_{12}\rangle=\frac{1}{2\sqrt{10}}\left(2|00\rangle+2|11\rangle+|10\rangle+|01\rangle\right),
\end{equation}
\begin{equation}\nonumber
|\phi_{13}\rangle=\frac{1}{2}|+\rangle|+\rangle,\ |\phi_{14}\rangle=\frac{1}{2}|-\rangle|-\rangle,
\end{equation}
\begin{equation}\nonumber
|\phi_{15}\rangle=\frac{1}{2}|-\rangle|-\rangle,\ |\phi_{16}\rangle=\frac{1}{2}|+\rangle|+\rangle.
\end{equation}

Assume that there is a convex decompostion $\Sigma=p\Sigma_1+(1-p)\Sigma_2$, where both $\Sigma_i$ are assemblages described by conditions i)-iii) of Theorem \ref{thm_no_sig_channe} in the main text. If so then 
\begin{equation}\nonumber
\Sigma_i=\begin{pmatrix}
\begin{array}{cc|cc}
 \alpha|\phi_1\rangle \langle \phi_1| &  \beta |\phi_2\rangle \langle \phi_2| & \gamma |\phi_5\rangle \langle \phi_5| &  \delta|\phi_6\rangle \langle \phi_6| \\  
  \alpha|\phi_3\rangle \langle \phi_3| &  \beta|\phi_4\rangle \langle \phi_4|& \gamma|\phi_7\rangle \langle \phi_7|   & \delta|\phi_8\rangle \langle \phi_8|\\ \hline
 \alpha|\phi_9\rangle \langle \phi_9| & \beta|\phi_{10}\rangle \langle \phi_{10}| &\gamma|\phi_{13}\rangle \langle \phi_{13}|&  \delta|\phi_{14}\rangle \langle \phi_{14}|\\
\alpha|\phi_{11}\rangle \langle \phi_{11}| & \beta|\phi_{12}\rangle \langle \phi_{12}|& \gamma|\phi_{15}\rangle \langle \phi_{15}|& \delta|\phi_{16}\rangle \langle \phi_{16}|
\end{array}
\end{pmatrix}
\end{equation}for some non-negative coefficients $\alpha,\beta,\gamma,\delta$ (since each initial column consists of different rank one operators and condition i) of Theorem \ref{thm_no_sig_channe} in the main text must be fulfilled).

Observe that due to condition ii) of Theorem \ref{thm_no_sig_channe} in particular we have
\begin{widetext}
\begin{equation}
\alpha\mathrm{Tr}_{C}(|\phi_1\rangle \langle \phi_1|)+\beta\mathrm{Tr}_{C}(|\phi_2\rangle \langle \phi_2|)=\gamma\mathrm{Tr}_{C}(|\phi_5\rangle \langle \phi_5|)+\delta\mathrm{Tr}_{C}(|\phi_6\rangle \langle \phi_6|)
\end{equation}and
\begin{equation}
\alpha\mathrm{Tr}_{C}(|\phi_9\rangle \langle \phi_9|)+\beta\mathrm{Tr}_{C}(|\phi_{10}\rangle \langle \phi_{10}|)=\gamma\mathrm{Tr}_{C}(|\phi_{13}\rangle \langle \phi_{13}|)+\delta\mathrm{Tr}_{C}(|\phi_{14}\rangle \langle \phi_{14}|).
\end{equation}
\end{widetext}These equalities implies that coefficients $\alpha,\beta,\gamma,\delta$ must fulfill
\begin{equation}\label{system_eq}
\begin{cases}
\frac{9}{5}\alpha+\frac{6}{5}\beta-\frac{3}{2}\gamma-\frac{3}{2}\delta=0\\
4\alpha-4\beta-5\gamma+5\delta=0\\
\alpha+\beta-\gamma-\delta=0
\end{cases}.
\end{equation}According to Kronecker-Capelli theorem space of solutions of linear system (\ref{system_eq}) is one dimensional. As $\alpha=\beta=\gamma=\delta=1$ form a non-zero solution of (\ref{system_eq}), normalization constrain implies that $\alpha=\beta=\gamma=\delta=1$ is the only possible choice of coefficients, so $\Sigma=\Sigma_i$ for $i=1,2$ and due to Theorem \ref{thm_no_sig_channe} in the main text considered quantum channel assemblages is an extreme point in the set of all asymmetric no-signaling channel assemblages $\mathbf{\tilde{ns}\Lambda}(2,2,2,2)$.

\section{Proof that $P_{abc|000}=P^{i}_{abc|000}$}\label{appC}
Let $\tilde{\mathcal{L}}=\left\{\tilde{\Lambda}_{ab|xy}\right\}_{a,b,x,y}$ be any asymmetric channel assemblage (acting between quantum subsystems $C$ and $\tilde{C}$) with some POVM elements $\tilde{M}_{c|z}$ such that the following formula
\begin{equation}
\tilde{P}_{abc|xyz}(.)=\mathrm{Tr}(\tilde{M}_{c|z}\tilde{\Lambda}_{ab|xy}(.))
\end{equation}defines an asymmetric hybrid channel behavior $\tilde{\mathcal{P}}=\left\{\tilde{P}_{abc|xyz}\right\}_{a,b,c,x,y,z}$. Observe that as $\mathcal{J}(\tilde{P}_{abc|xyz})=\mathrm{Tr}_{\tilde{C}}(\tilde{M}_{c|z}\otimes \mathds{1}_C\mathcal{J}(\tilde{\Lambda}_{ab|xy}))$, due to Theorem \ref{thm_no_sig_channe} in the main text we obtain a no-signaling assemblage $\tilde{\Sigma}=\left\{\sum_c\mathcal{J}(\tilde{P}_{abc|xyz})\right\}_{a,b,x,y}$ fulfilling $\sum_{a,b,c}\mathcal{J}(\tilde{P}_{abc|xyz}))=\frac{\mathds{1}}{d_C}$. Moreover, for any fixed $b,y$ the following sub-collection of operators $\tilde{\Sigma}^{(b|y)}=\left\{\mathcal{J}(\tilde{P}_{abc|xyz})\right\}_{a,c,x,z}$ form a no-signaling assemblage up to normalization (i.e. $\mathrm{Tr}(\sum_{ac}\mathcal{J}(\tilde{P}_{abc|xyz}))$ may be lesser than $1$).

Recall the quantum channel assemblage $\mathcal{L}=\left\{\Lambda_{ab|xy}\right\}_{a,b,x,y}$ from Example \ref{example_main} in the main text and related quantum channel behavior $\mathcal{P}=\left\{P_{abc|xyz}\right\}_{a,b,c,x,y,z}$ defined by $P_{abc|xyz}(\cdot)=\mathrm{Tr}(M_{c|z}\Lambda_{ab|xy}(\cdot))$ with $M_{0|0}=|0\rangle \langle 0|$ and $M_{0|1}=|+\rangle \langle +|$ (where $a,b,c,x,y,z\in\left\{0,1\right\}$). Observe that 
\begin{equation}\label{ex_as_1}
\Sigma^{(0|0)}=\frac{1}{8}\begin{pmatrix}
\begin{array}{cc|cc}
 2|0\rangle \langle 0| &  2|1 \rangle\langle 1| &2|+\rangle \langle+ | &2|-\rangle \langle - | \\  
 0 & 0&0  &0\\ \hline
  |0\rangle \langle 0|& |1\rangle \langle 1| &|+\rangle \langle+ |& |-\rangle \langle- | \\
 |0\rangle \langle 0| & |1\rangle \langle 1|& |+\rangle \langle +| & |-\rangle \langle -| 
\end{array}
\end{pmatrix}
\end{equation}
\begin{equation}\label{ex_as_2}
\Sigma^{(1|0)}=\frac{1}{8}\begin{pmatrix}
\begin{array}{cc|cc}
0 & 0&0  &0\\
 2|1\rangle \langle 1| &  2|0\rangle\langle 0| &2|+\rangle \langle+ | &2|-\rangle \langle - | \\  \hline
  |1\rangle \langle 1|& |0\rangle \langle 0| &|+\rangle \langle+ |& |-\rangle \langle- | \\
 |1\rangle \langle 1| & |0\rangle \langle 0|& |+\rangle \langle +| & |-\rangle \langle -| 
\end{array}
\end{pmatrix}
\end{equation}
\begin{equation}\label{ex_as_3}
\Sigma^{(0|1)}=\frac{1}{8}\begin{pmatrix}
\begin{array}{cc|cc}
 |0\rangle \langle 0| &  |1\rangle \langle 1| &|+\rangle \langle +| &|-\rangle \langle -| \\  
 |1\rangle \langle 1| & |0 \rangle\langle 0|&|+\rangle \langle +|   &|-\rangle \langle -|\\ \hline
  |+\rangle \langle +|& |+\rangle \langle +| &2|+\rangle \langle+ |& 0 \\
 |-\rangle \langle -| & |-\rangle \langle -|& 0 & 2|-\rangle \langle- | 
\end{array}
\end{pmatrix}
\end{equation}and 
\begin{equation}\label{ex_as_4}
\Sigma^{(1|1)}=\frac{1}{8}\begin{pmatrix}
\begin{array}{cc|cc}
 |0\rangle \langle 0| &  |1\rangle \langle 1| &|+\rangle \langle +| &|-\rangle \langle -| \\  
 |1\rangle \langle 1| & |0 \rangle\langle 0|&|+\rangle \langle +|   &|-\rangle \langle -|\\ \hline
  |-\rangle \langle -|& |-\rangle \langle -| &0& 2|-\rangle \langle- | \\
 |+\rangle \langle +| & |+\rangle \langle +|& 2|+\rangle \langle+ | & 0
\end{array}
\end{pmatrix}
\end{equation}where $\Sigma^{(b|y)}=\left\{\sigma_{ac|xz}^{(b|y)}\right\}_{a,c,x,z}=\left\{\mathcal{J}(P_{abc|xyz})\right\}_{a,c,x,z}$.

Assume that there exists a convex combination $\mathcal{P}=p_1\mathcal{P}_1+p_2\mathcal{P}_2$ such that both $\mathcal{P}_i=\left\{P^{(i)}_{abc|xyz}\right\}_{a,b,c,x,y,z}$ are asymmetric hybrid channel behaviors. Note that all $\Sigma_i^{(b|y)}=\left\{\sigma^{(b|y),i}_{ac|xz}\right\}_{a,c,x,z}=\left\{\mathcal{J}(P^{(i)}_{abc|xyz})\right\}_{a,c,x,z}$ are no-signaling assemblages up to normalization (due to the discussion in the above paragraph). Without the loss of generality fix $i$. Assumption concerning convex combination together with explicit form of (\ref{ex_as_1}-\ref{ex_as_4}) imply that $\sigma^{(b|y),i}_{ac|xz}=\alpha^{(b|y),i}_{ac|xz}\sigma^{(b|y)}_{ac|xz}$ for any $a,b,c,x,y,z$ and some non-negative coefficients $\alpha^{(b|y),i}_{ac|xz}$. Theorem \ref{sufficient nontrivial} in the main text applied respectively to $\Sigma^{(0|1)}$ and $\Sigma^{(1|1)}$ provides that in fact $\sigma^{(b|1),i}_{ab|xy}=\alpha^{(b|1),i}\sigma^{(c|z)}_{ab|xy}$ for any $a,b,c,x,z$ and some non-negative coefficients $\alpha^{(b|1),i}$. To simplify notation we put $\alpha=\alpha^{(0|1),i}$, $\beta=\alpha^{(1|1),i}$. Moreover, analysis of $\Sigma^{(0|0)}$ and $\Sigma^{(1|0)}$ shows that for all $c,z$ we have $\alpha^{(0|0),i}_{0c|0z}=\gamma_1$, $\alpha^{(1|0),i}_{1c|0z}=\gamma_2$, $\alpha^{(0|0),i}_{0c|1z}=\delta_1$, $\alpha^{(0|0),i}_{1c|1z}=\delta_3$, $\alpha^{(1|0),i}_{0c|1z}=\delta_2$, and $\alpha^{(1|0),i}_{1c|1z}=\delta_4$ for some $\gamma_1,\gamma_2, \delta_1,\delta_2,\delta_3,\delta_4$.

With this knowledge let us finally consider the no-signaling assemblage $\Sigma_i=\left\{\sum_c\mathcal{J}(P^{(i)}_{abc|xyz})\right\}_{a,b,x,y}$ given explicitly as (compare with (\ref{ex_as_1}-\ref{ex_as_4}))

\begin{equation}\label{final_ass}
\Sigma_i=\frac{1}{8}\begin{pmatrix}
\begin{array}{cc|cc}
 2\gamma_1\mathds{1} &  0 &\alpha\mathds{1} &\beta\mathds{1}\\  
 0 & 2\gamma_2\mathds{1}&\alpha\mathds{1}  &\beta\mathds{1}\\ \hline
  \delta_1\mathds{1}& \delta_2\mathds{1} &2\alpha|+\rangle \langle+ |& 2\beta|-\rangle \langle- | \\
 \delta_3\mathds{1}& \delta_4\mathds{1}&2\alpha |-\rangle \langle -| & 2\beta|+\rangle \langle +| 
\end{array}
\end{pmatrix}.
\end{equation}In particular, the following no-signaling equality 
\begin{equation}
(\delta_1+\delta_2)\mathds{1}=2\alpha|+\rangle \langle +|+2\beta|-\rangle \langle -|
\end{equation}can be true if and only if $\alpha=\beta$. Since no-signaling conditions also imply $2\gamma_1=\alpha+\beta$ and $2\gamma_2=\alpha+\beta$ while $2\alpha+2\beta=1$ (normalization) we get $\gamma_1=\gamma_2=1$. This concludes the argument as by applying inverse of the Choi-Jamio\l{}kowski isomorphism we obtain $P_{abc|000}=P^{i}_{abc|000}$ for all $a,b,c$.

\end{document}